\documentclass[letterpaper, 10 pt, conference]{ieeeconf}

\usepackage[
 letterpaper,
 left=48pt,
 right=48pt,
 top=60pt,
 bottom=43.5pt
]{geometry}

\IEEEoverridecommandlockouts

\usepackage{amsmath, amsthm}
\usepackage{amssymb}
\usepackage{booktabs}
\usepackage{graphicx}
\usepackage{cite}
\usepackage{algorithm}
\usepackage{algorithmic}
\usepackage{xcolor}

\newtheorem{theorem}{Theorem}

\newtheorem{lemma}{Lemma}
\newtheorem{corollary}{Corollary}
\newtheorem{assumption}{Assumption}

\DeclareMathOperator*{\minimize}{minimize}

\newcommand{\Aset}{\mathcal A}
\newcommand{\Jset}{\mathcal J}
\newcommand{\Cset}{\mathcal C}
\newcommand{\Xset}{\mathcal X}
\newcommand{\Uset}{\mathcal U}
\newcommand{\R}{\mathbb R}
\newcommand{\Prob}{\mathbb P}
\newcommand{\E}{\mathbb E}
\newcommand{\1}{\mathbf 1}
\newcommand{\norm}[1]{\left\lVert #1 \right\rVert}
\newcommand{\Pred}{\mathrm{Pred}}
\newcommand{\st}{\text{s.t.}}
\newcommand{\pospart}[1]{\left[#1\right]_{+}}
\newcommand{\eps}{\varepsilon}

\title{\LARGE \bf Multi-Agent Deferred-Decision Trajectory Optimization with Anytime-Valid Target Retention for Moving Candidates}

\author{Farbod Siahkali$^{1}$, Mengxue Hou$^{2}$, and Vijay Gupta$^{1}$%
\thanks{This work was supported in part by the U.S. Army Research Office under Grant 13001664 and in part by the Air Force Office of Scientific Research (AFOSR) under DEPSCoR Award FA9550-25-1-0283.}
\thanks{$^{1}$Farbod Siahkali and Vijay Gupta are with the Department of Electrical and Computer Engineering, Purdue University, West Lafayette, IN 47906, USA (Emails: \texttt{siahkali@purdue.edu}, \texttt{gupta869@purdue.edu}).}%
\thanks{$^{2}$Mengxue Hou is with the Department of Electrical Engineering, University of Notre Dame, Notre Dame, IN 46556, USA (Email: \texttt{mhou@nd.edu}).}%
}

\begin{document}
\maketitle

\begin{abstract}
We study multi-agent terminal interception when one of several moving candidates is the unknown target and both target identity and candidate terminal locations are uncertain. We develop a receding-horizon deferred-decision trajectory optimization (DDTO) controller that combines conformal terminal-location prediction, sequential candidate rejection, capacity-constrained assignment, and candidate-contingent planning. Under exchangeability conditions for terminal and identity residuals, with predictable querying, we establish fixed-time terminal-location coverage and anytime-valid target retention. At the final replan, when one control step remains, we bound the joint probability that a tightened-set DDTO plan passes nonlinear rollout verification but capture fails. The closed-loop controller applies fallback guidance to any agent without a verified DDTO plan and, across 100 paired scenarios, captures $79\%$ of targets under the strict control-effort budget and $85\%$ when that budget is relaxed.
\end{abstract}


\section{Introduction}
\label{sec:intro}

Multi-robot autonomy naturally couples sensing and control, since motion affects what information can be collected and sensing, in turn, affects subsequent decisions~\cite{norouzi2025novel,lee2025dmvc,yuan2025hierarchical,rezaei2025online}. We consider a team of agents tasked with intercepting one unknown target among several moving candidates, where this coupling arises through both target identification and motion planning. Neither the target identity nor the candidates' terminal locations are known when motion begins.

The agents must therefore make progress toward interception while preserving alternatives that may later become necessary. Committing too early can direct an agent toward a non-target, whereas excessive deferral can leave insufficient time or control authority for capture. The central difficulty is that inference, assignment, and trajectory planning are coupled~\cite{khass2026multi}. Because target identity is unresolved when the current control is applied, each agent must choose a single input that preserves feasible trajectories toward all assigned candidates.

\begin{figure*}[t]
\centering
\includegraphics[width=0.9\textwidth]{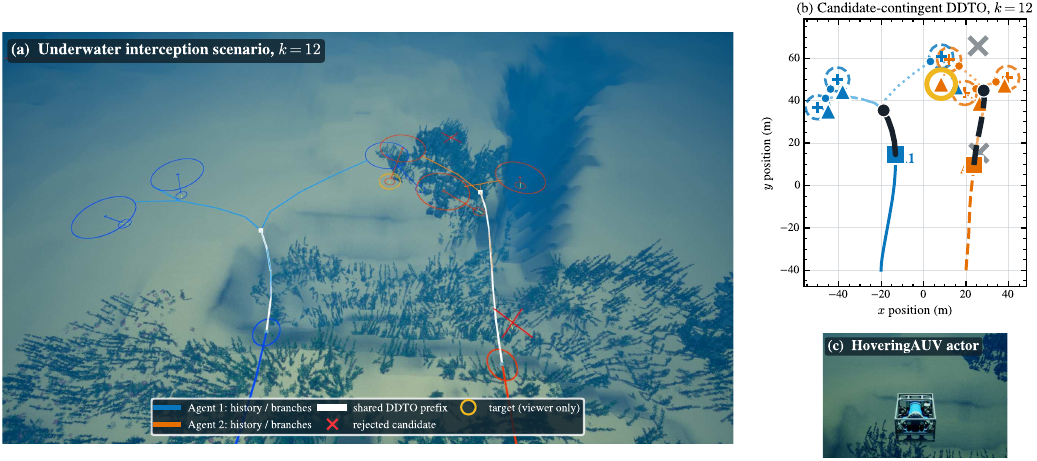}
\caption{Closed-loop HoloOcean demonstration of multi-AUV terminal interception at replan $k=12$.
(a) HoloOcean state after executing the control applied at the preceding replan.
(b) Corresponding planner state, initialized from simulator feedback, showing assignments, predicted terminal centers, conformal terminal regions, common DDTO prefixes, and candidate-contingent branches.
(c) HoveringAUV model used for the agents.
The target is highlighted in the figure but is not available to the controller.}
\label{fig:framework}
\end{figure*}

Contingency and scenario-based planners maintain plans for multiple future realizations~\cite{peters2024contingency,de2021scenario,calafiore2012robust}. Contingency model predictive control (CMPC) shares the currently applied input across nominal and contingency plans~\cite{cmpc}, while delayed-decision motion planning (DDMP) preserves multiple feasible responses up to a prescribed delayed decision time~\cite{ddmp}. DDTO constructs candidate-contingent branches with a common prefix and delays their separation while preserving terminal feasibility~\cite{elango2022deferring,elango2025deferred}.

Existing DDTO formulations typically assume fixed candidate terminal locations. Prior active-sensing DDTO~\cite{siahkali2026active}, which incorporates sensing into the optimization, still considers one agent, fixed terminal points, and candidate-set guarantees assessed at individual times. It leaves terminal-location uncertainty for moving candidates, anytime-valid target retention under adaptive querying, and multi-agent assignment unresolved. We consider these issues jointly in a receding-horizon multi-agent setting.

Statistical and learning-based methods have been developed across both estimation and prediction settings~\cite{mesforush2026,strawn2023conformal,mazloom2026barl}. Conformal prediction (CP) provides model-agnostic uncertainty quantification for regression and classification problems~\cite{angelopoulos2023gentle,fusioncp}. The two uncertainties considered here require different statistical guarantees. For terminal-location uncertainty, CP can provide marginal coverage of a candidate's terminal location at any prespecified replanning index~\cite{lindemann2023safe,sheng2024safe,dixit2023adaptive,muthali2023reachcal}.

Identity decisions, however, are made repeatedly: at each replan, agents select which candidates to query using current tracking information and controller-visible history, before observing the corresponding identity measurements. Rejected candidates are not reconsidered. Consequently, applying a fixed-time test independently at each replan does not control the probability of ever rejecting the target. The identity procedure must instead provide an anytime-valid retention guarantee under predictable querying, meaning that each query is selected from information available before the queried identity measurement is observed.

We combine these components in a closed-loop multi-agent DDTO framework. At each replan, conformal terminal regions represent candidate terminal-location uncertainty. Candidate-wise online conformal p-values~\cite{fedorova2012plugin} are converted to e-values and accumulated in an e-process~\cite{vovk2021evalues}. Threshold crossings then generate nested retained sets. With the target's adjusted identity residuals conditionally exchangeable given the exogenous scene and other-candidate information, and with predictable querying, Ville's inequality bounds the probability that the target is ever rejected~\cite{ville1939etude}.

The retained candidates are assigned subject to per-agent capacity, and each agent solves a DDTO problem whose branches share the applied control. At the final replan, when one control step remains, the controller uses a tightened terminal set. Target retention and terminal-location coverage then bound the probability that a verified DDTO plan is returned at the final replan but capture fails. When an agent has no verified plan, the controller instead applies fallback guidance toward a retained candidate.

The contributions are threefold. First, we formulate a receding-horizon multi-agent interception architecture integrating terminal-location uncertainty for moving candidates, sequential candidate rejection, capacity-constrained assignment, and candidate-contingent control. Second, we integrate candidate-wise online conformal evidence with predictable querying and establish anytime-valid target retention under explicit information assumptions. Third, we combine fixed-time terminal-location coverage and anytime-valid target retention to bound the probability that a verified tightened-set plan is returned at the final replan but capture fails.

\section{Problem Formulation}
\label{sec:problem}

Let $\Aset=\{1,\ldots,M\}$ denote the agent set and $\Jset=\{1,\ldots,N\}$ the moving-candidate set. Exactly one candidate is the target, with unknown index $J^\star\in\Jset$. Agent $a$ has assignment capacity $\kappa_a$, and we assume $N \leq \sum_{a\in\Aset}\kappa_a$. We discretize the mission into decision stages $t=0,\ldots,T$, where $T$ is a prescribed terminal interception deadline. At replan $k$, $H_k=T-k$ denotes the remaining planning horizon. Agent $a\in\Aset$ has state $x_t^a\in\Xset_a\subseteq\R^{n_a}$, control input $u_t^a\in\Uset_a\subseteq\R^{m_a}$, and deterministic dynamics
\[
x_{t+1}^a=f_a(x_t^a,u_t^a).
\]
Matrix $P_a\in\R^{p\times n_a}$ extracts the agent position. Each agent satisfies the effort budget $\sum_{t=0}^{T-1}c_a(x_t^a,u_t^a)\le B_a$, where $c_a$ is the stage effort and $B_a>0$ is the available budget.

Let $y_t^j \in \mathbb{R}^p$ be the true position of candidate $j \in \mathcal{J}$, whose motion is exogenous to the agent controls. We assume a centralized controller: at each replan, the current agent states and candidate tracking histories are available to the controller. For candidate $j$, $Y_{0:k}^j = (\widetilde{y}_0^j,\ldots,\widetilde{y}_k^j)$ denotes noisy tracking observations available at the replanning rate and used for terminal prediction, while $y_T^j$ remains unknown for $k<T$.

Identity information is obtained through active queries. When agent $a$ queries candidate $j$, it receives the candidate-specific measurement $m_k^{a,j}$, which the identity model uses to assess whether $j$ is the target. Tracking observations are available at every replan, whereas identity measurements are allocated through the query policy.

\paragraph*{\textbf{Terminal-location uncertainty}} For capture radius $\rho>0$, define the terminal mission-success event
\[
\mathrm{Cap}:=\left\{\min_{a\in\Aset}\norm{P_a x_T^a-y_T^{J^\star}}_2\le\rho\right\}.
\]
We use $\mathrm{Cap}$ as the mission-success criterion in both the theoretical analysis and the closed-loop numerical evaluation. The objective is to generate causal controls for terminal capture while satisfying the dynamics, state, input, effort, assignment, and sensing constraints. From $Y_{0:k}^j$, the controller constructs a conformal terminal region $\mathcal Z_k^j$ for each candidate. For every fixed deterministic $k<T$, the terminal-prediction objective is
\[
\Prob\!\left(
y_T^{J^\star}\in\mathcal Z_k^{J^\star}
\right)\ge1-\alpha_{\rm tr}.
\]

\paragraph*{\textbf{Target-identity uncertainty}} Identity information is updated through active queries. Let $\Cset_k$ denote the candidate set retained immediately before the replan-$k$ identity query, and let $\widehat\Cset_k=\Cset_{k+1}$ denote the post-query retained set used for planning at the same replan. The identity objective is the anytime statement
\[
\Prob\!\left(J^\star\in\Cset_k\ \text{for all }k=0,\ldots,T\right)\ge1-\alpha_{\rm id}.
\]
After the identity update, each candidate in $\widehat\Cset_k$ is assigned exactly once subject to capacity $\kappa_a$. For an assigned pair $(a,j)$, the branch $\{x_{h|k}^{a,j},u_{h|k}^{a,j}\}$ starts from $x_k^a$ and must satisfy the dynamics, constraints, remaining budget, and terminal condition induced by $\mathcal Z_k^j$. Because target identity is unresolved when the current control is applied, all branches assigned to one agent share their first input.

\section{Proposed Method}
\label{sec:method}

At each replan, the controller constructs terminal regions, updates identity evidence, assigns retained candidates, and solves one DDTO problem per agent. A plan is verified only if it passes the solver checks and a nonlinear rollout under the planning dynamics satisfies the state, effort, and terminal constraints. Otherwise, fallback guidance is used. Algorithm~\ref{alg:closed_loop} summarizes the recursion, and Fig.~\ref{fig:framework} shows one HoloOcean replan.

\subsection{Conformal Terminal Prediction}
\label{subsec:terminal_method}
For each deterministic replanning index $k<T$, let $\Pred_{T|k}$ be a terminal-location predictor fixed before deployment. Given candidate tracking history $Y_{0:k}^j$, it produces $\widehat y_{T|k}^j=\Pred_{T|k}(Y_{0:k}^j)$. An independent calibration set of $n_{\rm tr}$ trajectories provides the residuals
\begin{equation}
R_{T|k}^{(i)}=\norm{y_T^{(i)}-\Pred_{T|k}(Y_{0:k}^{(i)})}_2,\quad i=1,\ldots,n_{\rm tr}.
\label{eq:terminal_residual}
\end{equation}

Let $\ell_{\rm tr} = \left\lceil (n_{\rm tr}+1)(1-\alpha_{\rm tr}) \right\rceil$. Write the ordered calibration residuals as $R_{T|k,(1)} \le \cdots \le R_{T|k,(n_{\rm tr})}$, and augment them by $R_{T|k,(n_{\rm tr}+1)}=+\infty$. We define $q_{T|k}=R_{T|k,(\ell_{\rm tr})}$. Thus, when $\ell_{\rm tr}=n_{\rm tr}+1$, the conformal radius is infinite and the resulting terminal region is $\mathbb R^p$. The terminal region for candidate $j$ is
\[
\mathcal Z_k^j=\left\{y\in\R^p:\norm{y-\widehat y_{T|k}^j}_2\le q_{T|k}\right\}.
\]
The regions define the assignment reach cost and branch terminal constraints. Theorem~\ref{thm:terminal_coverage} gives fixed-time marginal coverage of the true target's terminal position at any prespecified replan.

\subsection{Querying and Online Identity Update}
\label{subsec:identity_method}
Let $g_\theta(m,z)\in[0,1]$ be a pre-trained identity predictor, where $z$ is the pre-measurement sensing context. The target-class nonconformity score is $s(m,z)=1-g_\theta(m,z)$. A conditional-quantile model $\widehat q(z)$ is fitted on a separate training split and frozen~\cite{romano2019cqr}, giving the adjusted residual $\eta(m,z)=s(m,z)-\widehat q(z)$. This adjustment reduces systematic context-dependent score shifts, but does not by itself establish exchangeability. The anytime-validity result separately requires the target residual sequence to satisfy Assumption~\ref{assume:identity_exchange}.

A disjoint held-out target-class split is randomly permuted and partitioned into candidate-specific reference sequences
\[
\mathcal R^j=\{\eta_{{\rm ref},1}^j,\ldots,\eta_{{\rm ref},n_{\rm ref}}^j\},\quad j\in\Jset.
\]
Each candidate is tested as the target using a separate private reference sequence, avoiding reuse of raw reference residuals across streams. At replan $k$, a capacity-constrained nearest-center query partition $\{\widetilde I_{a,k}\}_{a\in\Aset}$ of $\Cset_k$ makes each retained candidate query-eligible for at most one agent. This partition is used only for query selection.

Let $v_k^{a,j}\ge0$ be a sensing-quality score and let $\mathcal Q_{a,k}\subseteq\widetilde I_{a,k}$ contain the candidates admissible to agent $a$. The evidence gap is $g_k^j=\log(1/\alpha_{\rm id})-\log(\max\{\overline W_k^j,\eps\})$. The balanced-evidence policy prioritizes, in order, the fewest previous queries, smallest evidence gap, highest sensing quality, and smallest candidate index. Each agent selects at most one admissible candidate and issues no query when $\mathcal Q_{a,k}$ is empty. Selection may use controller-visible history and policy randomization, but not the current identity measurement or private residuals.

Let $n_k^j$ denote the number of queries of candidate $j$ before replan $k$. Suppose candidate $j$ is selected for query $r=n_k^j+1$ and produces identity measurement $m_k^{a,j}$ under context $z_k^{a,j}$. The identity update computes $\eta_r^j=\eta(m_k^{a,j},z_k^{a,j})$ and appends it to candidate $j$'s sequence. Let $N_r=n_{\rm ref}+r$ and write $(\xi_1^j,\ldots,\xi_{N_r}^j)=(\eta_{{\rm ref},1}^j,\ldots,\eta_{{\rm ref},n_{\rm ref}}^j,\eta_1^j,\ldots,\eta_r^j)$. The smoothed online conformal p-value is
\begin{equation}
p_r^j=\frac{\sum_{i=1}^{N_r}\1\{\xi_i^j>\xi_{N_r}^j\}+U_r^j\sum_{i=1}^{N_r}\1\{\xi_i^j=\xi_{N_r}^j\}}{N_r},
\label{eq:identity_pvalue}
\end{equation}
where $U_r^j\sim\operatorname{Unif}(0,1)$ is independent. Large target-class nonconformity therefore produces a small p-value. Under the true-target hypothesis, the conditions in Section~\ref{sec:guarantees} make these p-values conditionally uniform relative to the controller-visible pre-query history~\cite{vovk2005algorithmic,fedorova2012plugin}.

For fixed $\gamma\in(0,1)$, define $e_r^j=\gamma(p_r^j)^{\gamma-1}$ and $E_k^j=e_{n_k^j+1}^j$ if candidate $j$ is queried at replan $k$, and $E_k^j=1$ otherwise. Starting from $W_0^j=\overline W_0^j=1$, update
\begin{equation}
W_{k+1}^j=W_k^jE_k^j,\quad \overline W_{k+1}^j=\max\{\overline W_k^j,W_{k+1}^j\}.
\label{eq:eprocess}
\end{equation}
Large e-process values are evidence against candidate $j$ being the target. The retained set is $\Cset_k=\{j:\overline W_k^j<1/\alpha_{\rm id}\}$ and $\widehat\Cset_k=\Cset_{k+1}$. The running maximum makes rejection permanent, while Theorem~\ref{thm:anytime_retention} gives anytime-valid target retention. The set $\widehat\Cset_k$ is passed to assignment and DDTO.

\begin{algorithm}[t]
\caption{Closed-loop multi-agent DDTO controller}
\label{alg:closed_loop}
\footnotesize
\begin{algorithmic}[1]
\STATE Initialize $\Cset_0=\Jset$, $W_0^j=\overline W_0^j=1$, $n_0^j=0$, and $B_{a,\rm rem}(0)=B_a$.
\FOR{$k=0,\ldots,T-1$}
 \STATE Construct $\{\mathcal Z_k^j\}$, select identity queries, and update
 $\widehat\Cset_k$ using~\eqref{eq:identity_pvalue}--\eqref{eq:eprocess}.
 \STATE Form the terminal sets and solve the pruned reach-cost assignment; if infeasible, form nearest-distance fallback groups.
 \STATE If the assignment is feasible, solve~\eqref{eq:ddto_problem} with~\eqref{eq:common_prefix} for each agent with $I_{a,k}\neq\emptyset$ over $r\in\mathcal R_{a,k}(I_{a,k})$.
 \STATE For each agent, apply the first control from a verified DDTO plan when available, otherwise apply fallback guidance.
 \STATE Update states, histories, and remaining budgets.
\ENDFOR
\end{algorithmic}
\end{algorithm}

\subsection{Assignment, DDTO, and Applied Control}
\label{subsec:planning_method}

Let $\chi_{a,j,k}=1$ when candidate $j\in\widehat\Cset_k$ is assigned to agent $a$, and define $I_{a,k}=\{j:\chi_{a,j,k}=1\}$. Each retained candidate is assigned exactly once and each agent receives at most $\kappa_a$ candidates:
\begin{equation}
\begin{aligned}
\sum_{a\in\Aset}\chi_{a,j,k}&=1,
&&j\in\widehat\Cset_k,\\
\sum_{j\in\widehat\Cset_k}\chi_{a,j,k}&\le\kappa_a,
&&a\in\Aset
\end{aligned}
\label{eq:assignment_constraints}
\end{equation}

Following the reference-branch DDTO construction in~\cite{elango2022deferring,elango2025deferred}, consider agent $a$ with an assigned candidate set $I$. DDTO constructs one state-control branch $\{x_{h|k}^{a,j},u_{h|k}^{a,j}\}$ for each $j\in I$. One candidate $r\in I$ is used as the reference branch. For every $j\in I\setminus\{r\}$ and $h=0,\ldots,H_k$, let $\delta_{h|k}^{a,j}\in\{0,1\}$ indicate whether branch $j$ has separated from the reference branch by stage $h$. When $\delta_{h|k}^{a,j}=0$, the branches coincide. When $\delta_{h|k}^{a,j}=1$, they may differ. The resulting DDTO common-prefix constraints are
\begin{subequations}
\label{eq:common_prefix}
\begin{align}
\delta_{0|k}^{a,j}=0, \quad \delta_{h+1|k}^{a,j} &\ge \delta_{h|k}^{a,j},
&&h=0,\ldots,H_k-1,\\
\norm{x_{h|k}^{a,j}-x_{h|k}^{a,r}}_2
&\le M_x\delta_{h|k}^{a,j},
&&h=0,\ldots,H_k,\\
\norm{u_{h|k}^{a,j}-u_{h|k}^{a,r}}_2
&\le M_u\delta_{h|k}^{a,j},
&&h=0,\ldots,H_k-1.
\end{align}
\end{subequations}
These constraints hold for every $j\in I\setminus\{r\}$. The monotonicity constraint makes separation irreversible over the planning horizon. The condition $\delta_{0|k}^{a,j}=0$ forces all branches to share the control applied at the current replan. The constants $M_x$ and $M_u$ bound feasible differences after separation.

Before the final replan, each branch terminates in its candidate's conformal region. At $k=T-1$, when $q_{T|T-1}\le\rho$, a tightened set is used:
\begin{align}
\mathcal T_{a,k}^{j,\rm reg}
&=\{x\in\Xset_a:P_ax\in\mathcal Z_k^j\},\notag\\
\mathcal T_{a,T-1}^{j,\rm cap}
&=\left\{x\in\Xset_a:
\norm{P_ax-\widehat y_{T|T-1}^j}_2
\le\rho-q_{T|T-1}\right\}.
\label{eq:robust_capture_set}
\end{align}
Under terminal coverage, membership in the tightened set implies terminal distance at most $\rho$. Thus,
\begin{equation}
\mathcal X_{a,k}^{j,\rm term}
=
\begin{cases}
\mathcal T_{a,T-1}^{j,\rm cap},
&k=T-1,\ q_{T|T-1}\le\rho,\\
\mathcal T_{a,k}^{j,\rm reg},
&\text{otherwise}.
\end{cases}
\label{eq:planning_terminal_set}
\end{equation}

For a fixed assigned set $I$ and reference branch $r$, we solve the DDTO problem of~\cite{elango2022deferring,elango2025deferred},
\begin{subequations}
\label{eq:ddto_problem}
\begin{align}
\Psi_{a,k}^r(I):={}&
\minimize_{\{x,u,\delta\}}
\sum_{j\in I\setminus\{r\}}
\sum_{h=0}^{H_k-1}\delta_{h|k}^{a,j}\\
\st \quad
x_{h+1|k}^{a,j}
&=f_a(x_{h|k}^{a,j},u_{h|k}^{a,j}),
\quad h=0,\ldots,H_k-1,\\
x_{h|k}^{a,j}&\in\Xset_a,
\quad h=0,\ldots,H_k,\\
u_{h|k}^{a,j}&\in\Uset_a,
\quad h=0,\ldots,H_k-1,\\
\sum_{h=0}^{H_k-1}
&c_a(x_{h|k}^{a,j},u_{h|k}^{a,j})
\le B_{a,\rm rem}(k),\\
x_{H_k|k}^{a,j}
&\in\mathcal X_{a,k}^{j,\rm term}, \quad x_{0|k}^{a,j}=x_k^a.
\end{align}
\end{subequations}
All constraints apply to each $j\in I$ together with \eqref{eq:common_prefix}. The objective delays separation while preserving terminal feasibility.

Let $\mathcal R_{a,k}(I)\subseteq I$ denote the reference candidates evaluated by the DDTO solver. For $r\in\mathcal R_{a,k}(I)$, we solve~\eqref{eq:ddto_problem} and select the feasible reference with the smallest objective value. If none is feasible, the DDTO solve is declared infeasible. In the numerical implementation, we evaluate at most two references, ordered by increasing $\overline W_{k+1}^j$ and then terminal-center distance.

The controller minimizes total assignment cost subject to~\eqref{eq:assignment_constraints}. Let $r_{j,k}^{\rm term}$ be the position-space radius of \eqref{eq:planning_terminal_set}. The distance from agent $a$ to the terminal set of candidate $j$ is
\[
d_{a,j,k}
=
\pospart{
\norm{P_ax_k^a-\widehat y_{T|k}^j}_2
-r_{j,k}^{\rm term}
}.
\]
We use $c_{a,j,k}^{\rm asg}=(1+\lambda_e g_{k+1}^j)d_{a,j,k}$ and prune candidate-agent pairs that fail a coarse remaining-horizon reachability screen. The subsequent DDTO solve and nonlinear rollout verification enforce dynamic, budget, and terminal feasibility.

When the pruned reach-cost assignment is feasible, its sets $I_{a,k}$ define the fallback groups. Otherwise, a capacity-constrained nearest-distance partition defines them. An agent without a verified DDTO plan applies fallback guidance toward the candidate in its fallback group with smallest $\overline W_{k+1}^j$, with ties broken by terminal-center distance and index. Clipping and scaling enforce input and remaining-budget constraints.
After applying controls, $B_{a,\rm rem}(k+1)=B_{a,\rm rem}(k)-c_a(x_k^a,u_k^a)$, and the updated state, budget, histories, and identity evidence initialize the next replan.

\section{Theoretical Guarantees}
\label{sec:guarantees}

We establish fixed-time terminal coverage, anytime-valid target retention, and a verified final-step capture bound. We condition on the trained models, reference allocation, and query-rule parameters. Controller randomization remains part of the information filtration. Let $(\Omega,\mathcal F,\Prob)$ support all problem data and random variables. For $\mathsf H_j:=\{J^\star=j\}$, write $\Prob_j(\cdot)=\Prob(\cdot\mid\mathsf H_j)$ and $\E_j$ for expectation under $\Prob_j$.

The pre-query sigma-field $\mathcal H_k^-$ contains states, budgets, tracking histories, past query decisions and conformal outputs, retained sets, previous plans and controls, and the randomization used for the replan-$k$ queries. It excludes candidate-specific reference residuals and raw identity measurements. Terminal regions, the query partition, query choices, and sensing contexts are therefore $\mathcal H_k^-$-measurable. After the selected measurements are converted internally into conformal p-values, define
\[
 \mathcal H_k^+=\mathcal H_k^-\vee
 \sigma\!\left(\{p_{n_k^j+1}^j:j\text{ is queried at }k\}\right).
\]
The evidence factors, $W_{k+1}^j$, $\widehat\Cset_k$, assignment, branches, and controls are $\mathcal H_k^+$-measurable, and $\mathcal H_k^+\subseteq\mathcal H_{k+1}^-$. Thus, $\Cset_k$ is the pre-query set and $\widehat\Cset_k=\Cset_{k+1}$ is used for planning at time $k$.

\begin{assumption}
\label{assume:terminal_exchange}
For every deterministic $k<T$ and every $j$ with $\Prob(J^\star=j)>0$, conditional on the frozen predictor $\Pred_{T|k}$, the calibration residuals in \eqref{eq:terminal_residual} and $R_{T|k}^{\star,j}=\norm{y_T^j-\Pred_{T|k}(Y_{0:k}^j)}_2$ are exchangeable under $\mathsf H_j$.
\end{assumption}
Assumption~\ref{assume:terminal_exchange} gives the exchangeability condition required for fixed-time terminal coverage.

\begin{theorem}
\label{thm:terminal_coverage}
Under Assumption~\ref{assume:terminal_exchange}, for every $k<T$,
\begin{equation}
 \Prob\!\left(y_T^{J^\star}\in\mathcal Z_k^{J^\star}\right)
 \ge 1-\alpha_{\rm tr}.
 \label{eq:terminal_coverage}
\end{equation}
This is a fixed-time marginal statement over calibration and test randomness.
\end{theorem}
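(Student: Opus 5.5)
The plan is to prove the claim conditionally on each hypothesis $\mathsf H_j$ and then average over the prior on $J^\star$. Fix a deterministic $k<T$ and a $j$ with $\Prob(J^\star=j)>0$. On $\mathsf H_j$ the event $\{y_T^{J^\star}\in\mathcal Z_k^{J^\star}\}$ coincides with $\{y_T^j\in\mathcal Z_k^j\}$. By the definition of $\mathcal Z_k^j$, this is the event $\{R_{T|k}^{\star,j}\le q_{T|k}\}$, where $q_{T|k}=R_{T|k,(\ell_{\rm tr})}$ is the $\ell_{\rm tr}$-th smallest calibration residual, with the augmented value $+\infty$ when $\ell_{\rm tr}=n_{\rm tr}+1$. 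The calibration residuals are computed with the same frozen $\Pred_{T|k}$, so the argument is the usual split-conformal one. It reduces to a rank statement about the $n_{\rm tr}+1$ residuals $R_{T|k}^{(1)},\ldots,R_{T|k}^{(n_{\rm tr})},R_{T|k}^{\star,j}$.

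The key step is the rank bound. Under $\Prob_j$, and conditionally on $\Pred_{T|k}$, Assumption~\ref{assume:terminal_exchange} makes these $n_{\rm tr}+1$ values exchangeable. Define the rank of the test residual with ties broken in its favour: it is the number of values among the $n_{\rm tr}+1$ that are at most $R_{T|k}^{\star,j}$, counted so as to be smallest under ties. I will use the standard fact that $R_{T|k}^{\star,j}\le q_{T|k}$ holds whenever $R_{T|k}^{\star,j}$ is among the $\ell_{\rm tr}$ smallest of the augmented collection. Exchangeability then gives
\[
\Prob_j\!\left(R_{T|k}^{\star,j}\le q_{T|k}\right)\ge \frac{\ell_{\rm tr}}{n_{\rm tr}+1}\ge 1-\alpha_{\rm tr}.
\]
To make this rigorous with ties, I will argue via the complement. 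If $R^{\star}>R_{(\ell)}$, then at least $\ell$ calibration values lie strictly below $R^\star$. So $R^\star$ is strictly larger than the $\ell$-th smallest of the full $n_{\rm tr}+1$ values, and by exchangeability this event has probability at most $(n_{\rm tr}+1-\ell)/(n_{\rm tr}+1)\le\alpha_{\rm tr}$. The bound is obtained by a symmetry/counting argument: at most $n_{\rm tr}+1-\ell$ of the indices can be strictly above the $\ell$-th order statistic. The case $\ell_{\rm tr}=n_{\rm tr}+1$ is trivial, since then $q_{T|k}=\infty$.

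Finally, the conditional bound holds for each frozen predictor, so the tower property removes the conditioning on $\Pred_{T|k}$. Writing $\pi_j=\Prob(J^\star=j)$, I then sum
\[
\Prob\!\left(y_T^{J^\star}\in\mathcal Z_k^{J^\star}\right)=\sum_{j:\pi_j>0}\pi_j\,\Prob_j\!\left(y_T^j\in\mathcal Z_k^j\right)\ge 1-\alpha_{\rm tr}.
\]

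The main obstacle I expect is the tie-robust rank argument, which is the strict-versus-weak inequality issue above. A secondary point is making sure the conditioning is legitimate. Exchangeability is assumed under $\mathsf H_j$ jointly with the test residual of candidate $j$, and $k$ is deterministic. So no closed-loop adaptivity enters, and the statement is only marginal and fixed-time, as the theorem notes.
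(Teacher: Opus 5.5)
Your proposal is correct and follows the paper's proof: you condition on $\mathsf H_j$, apply the exchangeable-rank argument to the augmented order statistic to get $\Prob_j(R_{T|k}^{\star,j}\le q_{T|k})\ge 1-\alpha_{\rm tr}$, identify this event with $\{y_T^j\in\mathcal Z_k^j\}$, and average over $J^\star$. Your complement argument, that at most $n_{\rm tr}+1-\ell_{\rm tr}$ of the exchangeable values can strictly exceed their $\ell_{\rm tr}$-th order statistic, is an explicit version of the paper's remark that the weak inequality handles ties conservatively.
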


\begin{proof}
Fix $j$ and condition on $\mathsf H_j$. The exchangeable-rank argument applied to the augmented order statistic, with a weak inequality and therefore conservative handling of ties, gives $\Prob_j(R_{T|k}^{\star,j}\le q_{T|k})\ge1-\alpha_{\rm tr}$. This event is $y_T^j\in\mathcal Z_k^j$. Averaging over $J^\star$ proves \eqref{eq:terminal_coverage}.
\end{proof}

This establishes fixed-time terminal-location coverage but does not control errors accumulated through repeated identity rejection. We therefore turn to validity under the adaptive query history.
For candidate $j$, let $\mathcal G_{r-1}^j$ denote the controller-visible information immediately before its $r$-th query. This field contains its previous conformal p-values, the controller-visible outputs of the other candidate streams, tracked geometry, previous plans and controls, and policy randomization. It excludes candidate $j$'s raw reference and runtime residuals and its unused tie-breaking variables.

\begin{assumption}
\label{assume:identity_exchange}
For each candidate $j$, let $\mathcal E^j$ contain the exogenous scene, other candidate streams, and controller-visible randomization, but not candidate $j$'s raw residuals or tie-breaking variables. Under $\mathsf H_j$, conditional on $\mathcal E^j$, the reference and potential runtime residuals are exchangeable. The variables $U_1^j,\ldots,U_T^j$ are conditionally independent $\operatorname{Unif}(0,1)$ variables and are independent of these residuals given $\mathcal E^j$. Finally, the $r$-th query predictably reveals the next residual and satisfies $\mathcal G_{r-1}^j\subseteq \mathcal E^j\vee\sigma(p_1^j,\ldots,p_{r-1}^j)$.
\end{assumption}

Thus, predictable querying reveals the next exchangeable residual rather than selecting among unseen residuals, yielding the following conditional rank-validity result.

\begin{lemma}
\label{lem:online_rank_validity}
Under Assumption~\ref{assume:identity_exchange}, for every candidate $j$, query index $r=1,\ldots,T$, and $u\in[0,1]$,
\[
\Prob_j\!\left(p_r^j\le u\mid\mathcal G_{r-1}^j\right)=u.
\]
In particular, $p_r^j$ is conditionally superuniform relative to the complete controller-visible pre-query history.
\end{lemma}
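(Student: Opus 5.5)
The plan is to reduce the statement to the classical validity of smoothed online conformal p-values for an exchangeable sequence (Vovk--Gammerman--Shafer), carried out conditionally on $\mathcal E^j$, and then to transfer the result to the smaller controller-visible field $\mathcal G_{r-1}^j$ via the inclusion in Assumption~\ref{assume:identity_exchange}. Throughout we work under $\Prob_j$ and condition on $\mathcal E^j$. The reference residuals together with the potential runtime residuals form a conditionally exchangeable sequence $(\xi_1^j,\xi_2^j,\ldots)$. The predictable-query clause means the $r$-th query reveals exactly $\xi_{N_r}^j$, so adaptivity cannot select which residual is revealed. This step is what lets us treat the revealed residuals as a fixed prefix of an exchangeable sequence.

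\emph{Step 1 (single-step uniformity).} Condition additionally on the multiset (bag) $B_r=\{\xi_1^j,\ldots,\xi_{N_r}^j\}$. By exchangeability, $\xi_{N_r}^j$ is conditionally uniformly distributed over the $N_r$ positions of $B_r$. Smoothing by an independent $U_r^j$ then makes $p_r^j$ exactly $\operatorname{Unif}(0,1)$ given $B_r$ and $\mathcal E^j$. The computation is the standard rank argument: ties are spread linearly by $U_r^j$, so $\Prob_j(p_r^j\le u\mid B_r,\mathcal E^j)=u$ for every $u\in[0,1]$.

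\emph{Step 2 (independence from the past).} The key fact is that $p_1^j,\ldots,p_{r-1}^j$ are measurable with respect to $\sigma(B_r,\xi_{N_r}^j,\ldots)$ in a way that does not depend on which element of $B_r$ occupies position $N_r$. To make this precise, I would use the backward construction. The bags $B_{r}\supseteq B_{r-1}\supseteq\cdots$ are determined by the order of removal from $B_r$. Given $B_r$, the element in position $N_r$ is uniform. Given that element and $B_r$, the bag $B_{r-1}$ is determined, and the element in position $N_{r-1}$ is again uniform on $B_{r-1}$, and so on. Each $p_s^j$ with $s<r$ is a function of $B_s$, $\xi_{N_s}^j$ and $U_s^j$. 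Hence, conditional on $B_r$ and $\mathcal E^j$, the p-values $p_1^j,\ldots,p_r^j$ are i.i.d.\ uniform: the standard result that smoothed conformal p-values are independent. In particular, $p_r^j$ is uniform conditionally on $\mathcal E^j\vee\sigma(p_1^j,\ldots,p_{r-1}^j)$. The independence of the $U$'s from the residuals given $\mathcal E^j$ is used here.

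\emph{Step 3 (tower).} Since $\mathcal G_{r-1}^j\subseteq\mathcal E^j\vee\sigma(p_1^j,\ldots,p_{r-1}^j)$, the tower property gives $\Prob_j(p_r^j\le u\mid\mathcal G_{r-1}^j)=\E_j[u\mid\mathcal G_{r-1}^j]=u$. Superuniformity relative to the controller-visible pre-query history follows immediately, because that history is contained in $\mathcal G_{r-1}^j$ for the relevant query.

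I expect the main obstacle to be Step 2. The delicate points are establishing joint independence, not merely marginal uniformity, under reference-prefix augmentation with $n_{\rm ref}$ initial values whose p-values are not emitted, and making sure that the random query times do not break the argument. My first approach would be to index everything by query count $r$ rather than by replan $k$, so that the stopping structure is absorbed into $\mathcal E^j$ and the predictable-revelation clause. I would then run the Vovk backward-peeling induction starting from bag size $N_r$ down to $n_{\rm ref}+1$.
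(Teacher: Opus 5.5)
Your proposal is correct and follows essentially the same route as the paper. Both condition on $\mathcal E^j$, invoke the Vovk et al.\ result that smoothed online conformal p-values of a predictably revealed exchangeable sequence are i.i.d.\ uniform, and pass to $\mathcal G_{r-1}^j$ via the inclusion in Assumption~\ref{assume:identity_exchange} and the tower property. You just spell out the bag/backward-peeling argument the paper only cites; note that the earlier p-values do depend on which element of $B_r$ sits at position $N_r$ (through $B_{r-1}$), and the invariant fact your backward construction actually uses is that their conditional law given $B_{r-1}$ is always i.i.d.\ uniform.
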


\begin{proof}
Fix $j$ and condition on $\mathcal E^j$. The initial reference block and the potential runtime residuals then form one exchangeable sequence. With independent randomized tie breaking, the rank of each newly revealed residual among the reference block and all residuals revealed so far is uniform and independent of the preceding randomized ranks~\cite{vovk2005algorithmic,fedorova2012plugin}. Predictability ensures that the $r$-th query reveals the next residual rather than selecting among unseen values. Therefore $\Prob_j\!\left(p_r^j\le u \mid \mathcal E^j,p_1^j,\ldots,p_{r-1}^j\right)=u$.
The inclusion in Assumption~\ref{assume:identity_exchange} and the tower property give the claim.
\end{proof}

\begin{theorem}
\label{thm:anytime_retention}
Under Assumption~\ref{assume:identity_exchange},
\begin{equation}
\Prob\!\left(J^\star\in\Cset_k\ \text{for every }k=0,\ldots,T\right)
\ge1-\alpha_{\rm id}.
\label{eq:anytime_retention}
\end{equation}
\end{theorem}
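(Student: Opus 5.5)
The plan is to fix a candidate $j$ with $\Prob(J^\star=j)>0$, work under $\Prob_j$, and show that the target's e-process $(W_k^j)_{k=0}^{T}$ is a nonnegative supermartingale with $W_0^j=1$ with respect to the filtration $(\mathcal H_k^-)_k$. Ville's inequality will then bound $\Prob_j(\max_k W_k^j\ge 1/\alpha_{\rm id})\le\alpha_{\rm id}$, and averaging over $J^\star$ will finish the argument. The key observation is that $J^\star\notin\Cset_k$ for some $k\le T$ holds exactly when $\overline W_k^{J^\star}\ge 1/\alpha_{\rm id}$ for some $k$. Since $\overline W_k^j=\max_{k'\le k}W_{k'}^j$, under $\mathsf H_j$ the failure event equals $\{\max_{0\le k\le T}W_k^j\ge 1/\alpha_{\rm id}\}$.

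First I will check that each queried factor is an e-value conditionally on the past. By Lemma~\ref{lem:online_rank_validity}, $p_r^j$ is conditionally uniform given $\mathcal G_{r-1}^j$. For $\gamma\in(0,1)$, this gives
\[
\E_j\!\left[\gamma (p_r^j)^{\gamma-1}\mid\mathcal G_{r-1}^j\right]=\int_0^1\gamma u^{\gamma-1}\,du=1 .
\]
Next I must transfer this from the per-candidate query clock $r$ to the replan clock $k$. At replan $k$, the query decision for $j$ is $\mathcal H_k^-$-measurable, and so is $n_k^j$. If $j$ is not queried, $E_k^j=1$ trivially. If it is queried, I need $\E_j[E_k^j\mid\mathcal H_k^-]\le 1$.

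This transfer is the main obstacle, because $\mathcal H_k^-$ and $\mathcal G_{n_k^j}^j$ are not literally the same field. The first thing to try is the inclusion in Assumption~\ref{assume:identity_exchange}. The p-value statement in the proof of Lemma~\ref{lem:online_rank_validity} actually holds conditionally on $\mathcal E^j\vee\sigma(p_1^j,\ldots,p_{r-1}^j)$, which is a larger field. On the event $\{n_k^j=r-1,\ j\text{ queried at }k\}$, everything in $\mathcal H_k^-$ that matters for $j$'s stream is generated by exogenous data, the other streams, controller randomization, and $j$'s earlier p-values, since $\mathcal H_k^-$ excludes $j$'s raw residuals. So I will argue that $\mathcal H_k^-$ restricted to that event is contained in $\mathcal E^j\vee\sigma(p_1^j,\ldots,p_{r-1}^j)$, and that conditional uniformity therefore also holds given $\mathcal H_k^-$. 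I will do this by summing over the $\mathcal H_k^-$-measurable partition indexed by $r$.

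With the factor bound in hand, $\mathcal H_k^+\subseteq\mathcal H_{k+1}^-$ and $W_k^j\in\mathcal H_k^-$ give
\[
\E_j[W_{k+1}^j\mid\mathcal H_k^-]=W_k^j\,\E_j[E_k^j\mid\mathcal H_k^-]\le W_k^j .
\]
Here $W_{k+1}^j$ is $\mathcal H_k^+$-measurable and hence $\mathcal H_{k+1}^-$-adapted, so $(W_k^j)$ is a nonnegative supermartingale. Ville's inequality over the finite horizon $k=0,\ldots,T$ then yields $\Prob_j(\exists k:J^\star\notin\Cset_k)\le\alpha_{\rm id}$. Finally,
\[
\Prob(\cdot)=\sum_j\Prob(J^\star=j)\,\Prob_j(\cdot)
\]
gives \eqref{eq:anytime_retention}. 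Candidates with zero prior probability contribute nothing to this sum.
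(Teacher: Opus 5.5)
Your proposal is correct and follows the same route as the paper. Lemma~\ref{lem:online_rank_validity} makes each queried factor $\gamma(p_r^j)^{\gamma-1}$ a conditional e-value and unqueried steps contribute one, so $W_k^j$ is a nonnegative supermartingale under $\Prob_j$; Ville's inequality and averaging over $J^\star$ then give \eqref{eq:anytime_retention}. In addition, you transfer the bound from the per-candidate query clock to the replan filtration $(\mathcal H_k^-)$ by partitioning on $\{n_k^j=r-1,\ j\text{ queried at }k\}$ and using the inclusion in Assumption~\ref{assume:identity_exchange}. This fills in a step the paper's proof leaves implicit; it relies on predictability placing that event itself in $\mathcal E^j\vee\sigma(p_1^j,\ldots,p_{r-1}^j)$.
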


\begin{proof}
Under $\mathsf H_j$, Lemma~\ref{lem:online_rank_validity} gives $\E_j[e_r^j\mid\mathcal G_{r-1}^j]\le1$, while unqueried steps contribute one. Hence $\{W_k^j\}$ is a nonnegative supermartingale with $W_0^j=1$. Ville's inequality gives $\Prob_j(\max_{k\le T}W_k^j\ge1/\alpha_{\rm id})\le\alpha_{\rm id}$.
Averaging over the true index proves~\eqref{eq:anytime_retention}.
\end{proof}

A sufficient condition for Assumption~\ref{assume:identity_exchange} is that, conditional on $\mathcal E^j$, the adjusted target residuals are i.i.d. and unaffected by time, querying agent, geometry, or previous controls. Conditional independence across candidate streams, together with this within-stream condition, is sufficient. Non-target residuals may follow a different law.

We next combine target retention, terminal coverage, and verified feasibility for the regular terminal-set case. Corollary~\ref{cor:capture} handles the tightened final-replan case. For deterministic $k<T$, let $F_k$ denote the event that the assignment and required DDTO branches are feasible and pass nonlinear rollout verification. Let $R_k$ denote the event that the regular terminal set is used. By \eqref{eq:planning_terminal_set}, it holds for $k<T-1$ and equals $\{q_{T|T-1}>\rho\}$ at $k=T-1$. Let $B_k^{\rm reg}$ denote the event that the target has an assigned branch satisfying $P_a x_{H_k|k}^{a,J^\star}\in\mathcal Z_k^{J^\star}$.

\begin{theorem}
\label{thm:integrated}
Fix deterministic $k<T$. Under Assumption~\ref{assume:identity_exchange},
\begin{equation}
 \Prob(F_k\cap R_k\cap(B_k^{\rm reg})^c)
 \le\alpha_{\rm id}.
 \label{eq:branch_bound}
\end{equation}
Under Assumptions~\ref{assume:terminal_exchange} and~\ref{assume:identity_exchange},
\begin{equation}
 \Prob\!\left(F_k\cap R_k\cap
 \left(B_k^{\rm reg}\cap
 \{y_T^{J^\star}\in\mathcal Z_k^{J^\star}\}\right)^c\right)
 \le\alpha_{\rm id}+\alpha_{\rm tr}.
 \label{eq:branch_coverage_bound}
\end{equation}
On $F_k$, each agent with a nonempty assignment applies the common first input of its returned branches. Both bounds hold at any prespecified replan $k<T$.
\end{theorem}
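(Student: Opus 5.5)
The plan is a containment argument: first show that, on $F_k\cap R_k$, the only way $B_k^{\rm reg}$ can fail is that the target was rejected by replan $k+1$. Then Theorem~\ref{thm:anytime_retention} gives \eqref{eq:branch_bound}. For \eqref{eq:branch_coverage_bound}, a union bound adding Theorem~\ref{thm:terminal_coverage} gives the second inequality.

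\textbf{Step 1 (structural inclusion).} We claim
\[
F_k\cap R_k\cap\{J^\star\in\widehat\Cset_k\}\subseteq B_k^{\rm reg}.
\]
On $F_k$ the assignment is feasible, so by \eqref{eq:assignment_constraints} every $j\in\widehat\Cset_k$ is assigned to exactly one agent. Hence if $J^\star\in\widehat\Cset_k$, there is an agent $a$ with $J^\star\in I_{a,k}$. Again on $F_k$, the required DDTO branches for $I_{a,k}$ are feasible and pass rollout verification. In particular, the branch for $J^\star$ satisfies the terminal constraint $x_{H_k|k}^{a,J^\star}\in\mathcal X_{a,k}^{J^\star,\rm term}$. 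On $R_k$, \eqref{eq:planning_terminal_set} makes this terminal set equal to $\mathcal T_{a,k}^{J^\star,\rm reg}$, so $P_ax_{H_k|k}^{a,J^\star}\in\mathcal Z_k^{J^\star}$. This is exactly $B_k^{\rm reg}$.

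The point to check carefully is that "required branches" in the definition of $F_k$ means branches for every assigned retained candidate, including the target, and not only the reference branch. Since $J^\star$ is never observed by the controller, the event $F_k$ must cover all of $I_{a,k}$. I will state this reading of $F_k$ explicitly.

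\textbf{Step 2 (retention).} Since $\widehat\Cset_k=\Cset_{k+1}$ and $k+1\le T$, taking complements in Step 1 gives
\[
F_k\cap R_k\cap(B_k^{\rm reg})^c\subseteq\{J^\star\notin\Cset_{k+1}\}\subseteq\{\exists\,k'\le T:\ J^\star\notin\Cset_{k'}\}.
\]
Theorem~\ref{thm:anytime_retention} bounds the probability of the last event by $\alpha_{\rm id}$, which proves \eqref{eq:branch_bound}. No measurability issue arises, because the bound is on a fixed event at a deterministic $k$. The anytime guarantee is stronger than needed here, but it also means no selection effect from adaptive querying enters the argument.

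\textbf{Step 3 (coverage).} By De Morgan,
\[
\left(B_k^{\rm reg}\cap\{y_T^{J^\star}\in\mathcal Z_k^{J^\star}\}\right)^c=(B_k^{\rm reg})^c\cup\{y_T^{J^\star}\notin\mathcal Z_k^{J^\star}\}.
\]
Intersecting with $F_k\cap R_k$ and applying the union bound, the first term is at most $\alpha_{\rm id}$ by Step 2. For the second term, drop the intersection with $F_k\cap R_k$ and apply Theorem~\ref{thm:terminal_coverage} at the deterministic $k$, which gives at most $\alpha_{\rm tr}$. Adding the two yields \eqref{eq:branch_coverage_bound}.

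\textbf{Final remark.} The statement that each agent applies the common first input on $F_k$ follows directly from the construction. On $F_k$ a verified plan exists, so Algorithm~\ref{alg:closed_loop} applies its first control. The constraint $\delta_{0|k}^{a,j}=0$ in \eqref{eq:common_prefix} forces $u_{0|k}^{a,j}=u_{0|k}^{a,r}$ for all $j\in I_{a,k}$, so this control is common to all branches.
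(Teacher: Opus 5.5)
Your proposal is correct and follows the paper's proof. Like the paper, you use the inclusion $F_k\cap R_k\cap\{J^\star\in\widehat\Cset_k\}\subseteq B_k^{\rm reg}$, bound the target's rejection with Theorem~\ref{thm:anytime_retention} via $\widehat\Cset_k=\Cset_{k+1}$, add a union bound with Theorem~\ref{thm:terminal_coverage}, and get the common first input from $\delta_{0|k}^{a,j}=0$; your explicit reading that $F_k$ covers every assigned branch, including the target's, is what the paper's proof uses implicitly.
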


\begin{proof}
Theorem~\ref{thm:anytime_retention} gives $\Prob(J^\star\notin\widehat\Cset_k)\le\alpha_{\rm id}$. On the complement, $F_k\cap R_k$ assigns the target exactly once and enforces its terminal-region constraint, so $F_k\cap R_k\cap \{J^\star\in\widehat\Cset_k\} \subseteq B_k^{\rm reg}$.
This proves \eqref{eq:branch_bound}. Theorem~\ref{thm:terminal_coverage} and a union bound give \eqref{eq:branch_coverage_bound}. Finally, $\delta_{0|k}^{a,j}=0$ in \eqref{eq:common_prefix} forces all first controls assigned to agent $a$ to agree.
\end{proof}

\begin{figure*}[t]
\centering
\includegraphics[width=0.95\textwidth]{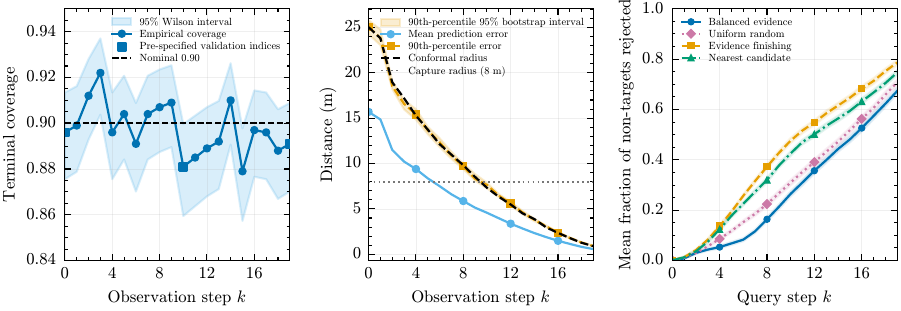}
\caption{Statistical component validation over 1000 Monte Carlo trials. Terminal coverage, prediction error and conformal radius, and mean non-target rejection are reported. Shading gives pointwise 95\% intervals.}
\label{fig:statistical-diagnostics}
\end{figure*}

At $k=T-1$, the tightened terminal set $\mathcal T_{a,T-1}^{j,\rm cap}$ in \eqref{eq:robust_capture_set} is used whenever $q_{T|T-1}\le\rho$, leading to the following result.
\begin{corollary}
\label{cor:capture}
Let $F_{T-1}^{\rm cap}$ be the event that $q_{T|T-1}\le\rho$, the tightened terminal sets are defined as in~\eqref{eq:robust_capture_set}, the returned assignment and DDTO problems are feasible, and nonlinear rollout under the planning dynamics $f_a$ verifies the resulting control sequence. Under Assumptions~\ref{assume:terminal_exchange} and~\ref{assume:identity_exchange},
\[
 \Prob(F_{T-1}^{\rm cap}\cap\mathrm{Cap}^c)
 \le\alpha_{\rm id}+\alpha_{\rm tr}.
\]
\end{corollary}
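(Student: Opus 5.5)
The plan mirrors the proof of Theorem~\ref{thm:integrated}, with the tightened set in place of the regular terminal region. I will show the deterministic inclusion
\[
F_{T-1}^{\rm cap}\cap\{J^\star\in\widehat\Cset_{T-1}\}\cap\{y_T^{J^\star}\in\mathcal Z_{T-1}^{J^\star}\}\subseteq \mathrm{Cap}.
\]
Taking complements then gives
\[
F_{T-1}^{\rm cap}\cap\mathrm{Cap}^c\subseteq\{J^\star\notin\widehat\Cset_{T-1}\}\cup\{y_T^{J^\star}\notin\mathcal Z_{T-1}^{J^\star}\}.
\]
A union bound finishes the argument. Theorem~\ref{thm:anytime_retention} bounds the first event by $\alpha_{\rm id}$, since $\widehat\Cset_{T-1}=\Cset_T$ and the anytime statement includes $k=T$. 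Theorem~\ref{thm:terminal_coverage} bounds the second event by $\alpha_{\rm tr}$, applied at the deterministic index $k=T-1<T$.

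To establish the inclusion, I work on the intersection of the three events.

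\emph{Assignment.} Because $J^\star\in\widehat\Cset_{T-1}$, the feasible assignment \eqref{eq:assignment_constraints} places $J^\star$ in exactly one set $I_{a,T-1}$. Feasibility of agent $a$'s DDTO problem then yields a branch for $J^\star$.

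\emph{Terminal constraint.} Since $q_{T|T-1}\le\rho$, the terminal set \eqref{eq:planning_terminal_set} for this branch is $\mathcal T_{a,T-1}^{J^\star,\rm cap}$.

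\emph{Executed state equals planned state.} At $k=T-1$ we have $H_{T-1}=1$, so the branch is a single step. By the common-prefix constraint $\delta_{0|T-1}=0$, all of agent $a$'s branches share the first input, and on $F_{T-1}^{\rm cap}$ that input is the one applied. The dynamics are deterministic, and rollout verification is performed under the same $f_a$. Hence the realized state satisfies $x_T^a=f_a(x_{T-1}^a,u_{0|T-1}^{a,J^\star})=x_{1|T-1}^{a,J^\star}$, so $x_T^a\in\mathcal T_{a,T-1}^{J^\star,\rm cap}$.

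\emph{Capture.} The triangle inequality gives
\[
\norm{P_ax_T^a-y_T^{J^\star}}_2\le\norm{P_ax_T^a-\widehat y_{T|T-1}^{J^\star}}_2+\norm{\widehat y_{T|T-1}^{J^\star}-y_T^{J^\star}}_2\le(\rho-q_{T|T-1})+q_{T|T-1}=\rho.
\]
Therefore the minimum over agents is at most $\rho$, and $\mathrm{Cap}$ holds.

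The main obstacle is the executed-state step. It requires that on $F_{T-1}^{\rm cap}$ the control actually applied by agent $a$ is the verified DDTO first input, not a fallback, and that the realized transition coincides with the planned one. I would handle this by reading $F_{T-1}^{\rm cap}$ as requiring verification of every agent holding a nonempty assignment, or at least of the agent assigned $J^\star$. Algorithm~\ref{alg:closed_loop} then applies that verified first control. Because the dynamics are deterministic and verification uses $f_a$ itself, no model mismatch enters.

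A secondary point is measurability: retention is needed for $\widehat\Cset_{T-1}$, the post-query set at the final replan. This is why the $k=T$ case of \eqref{eq:anytime_retention} must be invoked rather than $k=T-1$.
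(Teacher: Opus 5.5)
Your proposal is correct and follows essentially the same route as the paper's proof. You show that on retention of $J^\star$ in $\widehat\Cset_{T-1}=\Cset_T$ and terminal coverage at $k=T-1$, the verified one-step tightened-set branch places $x_T^a$ in $\mathcal T_{a,T-1}^{J^\star,\rm cap}$; the triangle inequality then gives capture, and a union bound yields $\alpha_{\rm id}+\alpha_{\rm tr}$, where your explicit executed-state argument (deterministic $f_a$, $H_{T-1}=1$, verified first input applied) spells out what the paper compresses into ``because one step remains, its common first input is the applied input.''
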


\begin{proof}
On target retention and $F_{T-1}^{\rm cap}$, the target is assigned to a verified branch satisfying~\eqref{eq:robust_capture_set}. Because one step remains, its common first input is the applied input. On terminal coverage, the triangle inequality gives $\norm{P_ax_T^a-y_T^{J^\star}}_2 \le (\rho-q_{T|T-1})+q_{T|T-1}=\rho$. A union bound over target rejection and terminal miscoverage proves the claim.
\end{proof}
Thus, Corollary~\ref{cor:capture} bounds capture failure only when a verified tightened-set plan is available at the final replan. It does not bound unconditional failure of the complete controller, whose fallback cases are evaluated empirically in Section~\ref{sec:experiments}.

\section{Numerical Evaluation}
\label{sec:experiments}

We now test closed-loop capture and control effort against CMPC, DDMP, commitment, and fallback baselines under a strict control-effort budget, then repeat the same paired scenarios after relaxing that budget. Second, we use component and fallback ablations to separate the effects of candidate rejection, query selection, assignment, and fallback guidance. Finally, we report planning feasibility and computation.

\subsection{Setup and Evaluation Protocol}
\label{subsec:experiment_setup}

We use two agents, eight candidates, $T=20$, $\Delta t=1$~s, $\kappa_a=4$, $\rho=8$~m, $M_x=300$, and $M_u=900$. The experiments are stop-the-clock: $\Delta t$ is the simulated dynamics/control discretization, not a wall-clock replanning period, and optimization time is not charged to the mission. At each replan, both agent states and all candidate tracking histories are available to the centralized controller without communication delay or loss. The same tracking and identity measurement models are used for both agents. We evaluate 100 independently generated scenarios shared across methods at per-agent budgets $B_a\in\{150,250\}\times10^3$~N$^2$s. The candidate layout and budgets were selected in preliminary sensitivity experiments and then frozen.

\begin{table}[t]
\centering
\caption{Closed-loop comparison at $B_a=150\times10^3$~N$^2$s over 100 paired scenarios. Effort is reported as median [IQR]. Bold denotes the highest observed capture rate.}
\label{tab:budget-150}
\begin{tabular}{@{}lcc@{}}
\toprule
Method & Capture (\%)$\uparrow$ &
Team effort ($10^3$ N$^2$s)$\downarrow$ \\
\midrule
Proposed DDTO & \textbf{79.0\%} & 214.9 [164.1, 274.5] \\
CMPC~\cite{cmpc} & 67.0\% & 233.4 [168.6, 281.8] \\
DDMP~\cite{ddmp} & 46.0\% & 257.3 [187.4, 300.0] \\
\midrule
Fallback-only & 6.0\% & 300.0 [300.0, 300.0] \\
Fallback-disabled & 13.0\% & 46.6 [34.4, 65.6] \\
\midrule
Center--commit & 13.0\% & 217.2 [171.9, 300.0] \\
Immediate--commit & 53.0\% & 189.0 [155.2, 223.8] \\
\bottomrule
\end{tabular}
\end{table}

\begin{table*}[t]
\centering
\caption{Component ablations at $B_a=150\times10^3$~N$^2$s over 100 paired scenarios. Effort and complete verified-plan rate are reported as median [IQR]. Bold denotes the highest observed capture rate.}
\label{tab:closed-loop-ablation}
\footnotesize
\resizebox{\textwidth}{!}{%
\begin{tabular}{llllccc}
\toprule
Variant & Candidate rejection & Query rule & Assignment &
\shortstack{Capture $\uparrow$} &
Effort ($10^3$ N$^2$s)$\downarrow$ &
Complete verified-plan rate (\%)$\uparrow$ \\
\midrule
Proposed DDTO
& Yes
& Balanced evidence
& Reach cost
& \textbf{79.0\%}
& 214.9 [164.1, 274.5]
& 80.0 [80.0, 85.0] \\

No candidate rejection
& No
& Balanced evidence
& Reach cost
& 68.0\%
& 268.1 [234.1, 300.0]
& 80.0 [75.0, 80.0] \\

Uniform-random query
& Yes
& Uniform random
& Reach cost
& 67.0\%
& 172.7 [121.3, 229.6]
& 85.0 [80.0, 86.0] \\

Clustered assignment
& Yes
& Balanced evidence
& Terminal clustering
& 63.0\%
& 260.5 [209.2, 300.0]
& 80.0 [80.0, 85.0] \\
\bottomrule
\end{tabular}}
\end{table*}

The Monte Carlo comparisons use the reduced six-degree-of-freedom Fossen model~\cite{fossen2011handbook}, $M\dot{\nu}=\tau-D\nu$, with HoveringAUV mass and actuation parameters~\cite{potokar2024holoocean}. We use $\operatorname{diag}(M)=(31.03,31.03,31.03,0.182,6.690,6.690)$ and $\operatorname{diag}(D)=(1.552,1.552,1.552,0.071,1.153,6.690)$, with body-force and body-speed norm bounds of $400$~N and $10$~m/s, respectively, depth limits $[-15,-1]$~m, and RK4 integration. Thruster dynamics and control allocation are omitted.

We separately demonstrate closed-loop execution with the HoloOcean~\cite{potokar2024holoocean} HoveringAUV plant. Simulator feedback initializes each replan, while DDTO optimizes with the reduced model and applies only its first control before receiving new simulator feedback. This is a qualitative execution demonstration under model mismatch, and is not covered by Corollary~\ref{cor:capture}. Figure~\ref{fig:framework} shows a snapshot at one replan, while the supplementary video presents a separate complete closed-loop episode.

The agents start at $(\pm20,-40,-5.5)$~m with zero velocity. We initialize the eight candidates in a sparse-maximin horizontal layout within the $75$~m simulation domain. The fixed anchor pattern is randomly permuted across candidate identities and perturbed by $0.5$~m horizontal Gaussian jitter. Initial candidate depth is centered at $-5.5$~m and forward body velocity is centered at $1$~m/s. Candidates follow the same vehicle dynamics under bounded AR(1) body forces and are constrained to depths in $[-12,-2]$~m.

Tracking measurements are corrupted by i.i.d. Gaussian noise with covariance $0.25^2I_3$~m$^2$. We use a constant-velocity predictor for terminal prediction and calibrate the conformal radius separately at each replan using 2000 trajectories.

Identity logits follow $\mathcal N(1,1)$ for the target and $\mathcal N(1-1.875v,1)$ for non-targets, where $v=\exp(-r_k^{a,j}/75)\operatorname{clip}(d_{{\rm iso},k}^j/10,0.35,1.35)$, $r_k^{a,j}$ is agent--candidate range, and $d_{{\rm iso},k}^j$ is the nearest-candidate separation. Thus, target logits do not depend on the sensing context in this study. The experiment does not test validity under context-dependent target-score shifts. Queries are permitted within $60$~m. Each candidate has 250 private target-class reference residuals, with $(\alpha_{\rm tr},\alpha_{\rm id},\gamma)=(0.10,0.05,0.55)$ and $\lambda_e=0.2$. Assignment uses a remaining-horizon reachability screen with $2$~m slack.

The primary closed-loop configuration uses two sequential convex programming (SCP) iterations, each linearizing the RK4-discretized dynamics by finite differences about the current nominal trajectory and updating the nominal with the accepted solution. We use fixed $\ell_\infty$ trust regions of $100$ for state and $400$~N for input and solve each convexified problem with MOSEK under a $60$~s time limit. A solution is accepted only when its maximum constraint and integrality violations are at most $10^{-4}$. A nonlinear RK4 rollout of the returned controls is then checked against the state, effort, and terminal constraints. Plans failing these checks, including those affected by solver timeouts, are treated as unverified, and the controller applies fallback guidance.

A shifted verified-plan tail initializes SCP when available. Otherwise, fallback-guidance rollouts are used. We use $c_a(x,u)=\Delta t\norm{u}_2^2$ and compare per-agent budgets $B_a\in\{150,250\}\times10^3$~N$^2$s, with the same limit applied to each branch-planning problem.

All candidate-contingent methods use the same guidance rollout and fallback controller: a saturated proportional--derivative (PD) law toward the predicted terminal center with gains $26$~N/m and $34$~N\,s/m. World-frame position and velocity errors generate a body-frame force, saturated at $400$~N and scaled when needed to satisfy remaining-budget constraints.

\subsection{Compared Controllers}

The CMPC~\cite{cmpc} and DDMP~\cite{ddmp} baselines use the same sensing, identity update, terminal regions, assignment, dynamics, fallback, effort constraints, and replanning procedure as Proposed DDTO. CMPC couples only the first control across branches, whereas DDMP fixes a three-stage common prefix. Proposed DDTO instead minimizes the branch-separation objective in~\eqref{eq:ddto_problem}.

The clustered-assignment variant sorts retained candidates by the first coordinate of their predicted terminal centers and partitions the ordered list into contiguous capacity-feasible assigned sets. Immediate--commit selects, for each agent, the candidate in its assigned set with the smallest $\overline W_{k+1}^j$, retains that choice until the candidate is rejected, and solves a singleton terminal trajectory problem at every replan. This gives a simpler committed single-trajectory baseline. Center--commit plans toward the centroid of the assigned candidates' predicted terminal centers until $\lceil0.95T\rceil$, after which it selects the candidate with the smallest $\overline W_{k+1}^j$.

Fallback-only uses the same sensing, candidate rejection, reach-cost assignment, and fallback guidance policy as Proposed DDTO, but applies fallback guidance at every replan without solving a DDTO problem. Fallback-disabled retains DDTO planning but applies zero body force whenever no verified DDTO plan is available.

\begin{figure}[t]
\centering
\includegraphics[width=0.85\columnwidth]{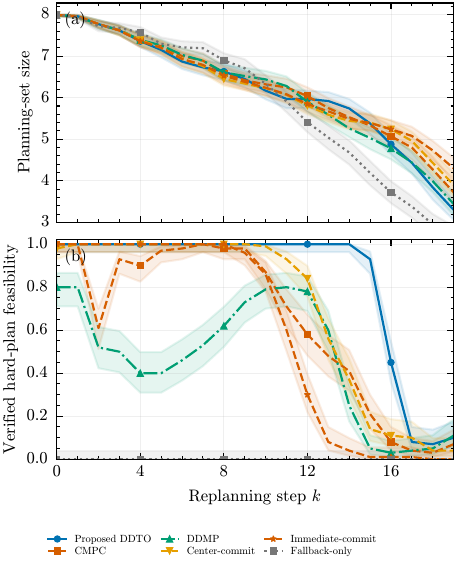}
\caption{Planning diagnostics under the relaxed $250\times10^3$~N$^2$s per-agent budget over 100 paired scenarios: (a) post-query retained-set size and (b) complete verified-plan rate for the compared controllers. Shading denotes pointwise 95\% intervals.}
\label{fig:planning-diagnostics}
\end{figure}

\subsection{Resource-Constrained Performance}

We separately use 1000 Monte Carlo trials to validate the statistical components in Fig.~\ref{fig:statistical-diagnostics}. The balanced-evidence policy retains the true target throughout the horizon in $99.3\%$ of trials, corresponding to an empirical target-rejection rate of $0.7\%$ over the horizon, below the nominal $\alpha_{\rm id}=5\%$ level. The figure also reports terminal coverage, terminal-region contraction, and non-target rejection.

Tables~\ref{tab:budget-150} and~\ref{tab:budget-250} show that Proposed DDTO attains the highest observed capture rate among the primary planners under both budgets: $79\%$ versus $67\%$ CMPC and $46\%$ DDMP under the strict budget, and $85\%$ versus $74\%$ for both baselines under the relaxed budget.

\begin{table}[t]
\centering
\caption{Closed-loop comparison after relaxing the per-agent effort budget to $B_a=250\times10^3$~N$^2$s over the same 100 paired scenarios. Effort is reported as median [IQR]. Bold denotes the highest observed capture rate.}
\label{tab:budget-250}
\begin{tabular}{@{}lcc@{}}
\toprule
Method & Capture (\%)$\uparrow$ &
Team effort ($10^3$ N$^2$s)$\downarrow$ \\
\midrule
Proposed DDTO & \textbf{85.0\%} & 271.3 [181.4, 339.3] \\
CMPC~\cite{cmpc} & 74.0\% & 253.2 [162.4, 357.4] \\
DDMP~\cite{ddmp} & 74.0\% & 334.7 [246.4, 428.1] \\
\midrule
Fallback-only & 28.0\% & 500.0 [477.5, 500.0] \\
Fallback-disabled & 14.0\% & 46.8 [34.9, 67.0] \\
\midrule
Center--commit & 12.0\% & 288.2 [208.2, 380.6] \\
Immediate--commit & 52.0\% & 189.4 [151.0, 232.1] \\
\bottomrule
\end{tabular}
\end{table}

Table~\ref{tab:closed-loop-ablation} evaluates the component ablations in the strict-budget setting, where preserving alternatives is most constrained by the effort budget. Removing candidate rejection, using uniform-random queries, and replacing reach-cost assignment reduce observed capture rate from $79\%$ to $68\%$, $67\%$, and $63\%$, respectively. Effort is reported as median [IQR]. Uniform-random querying uses less effort, indicating an observed capture--effort tradeoff.

\subsection{Planning Feasibility and Computation}

Under the relaxed budget, Proposed DDTO has a median complete verified-plan rate of $80.0\%$ across the 100 scenarios, as shown in Fig.~\ref{fig:planning-diagnostics}. Fallback-disabled captures $13/100$ targets under the strict budget and $14/100$ after the budget is relaxed, while Fallback-only captures $6/100$ and $28/100$, respectively. Proposed DDTO captures $79/100$ and $85/100$. Neither Fallback-disabled nor Fallback-only reproduces the capture rate of Proposed DDTO.

One fixed-reference problem $\Psi_{a,k}^r(I)$ has $O(|I_{a,k}|H_k(n_a+m_a))$ continuous and $O((|I_{a,k}|-1)H_k)$ binary variables. Monotone separation gives at most $(H_k+1)^{|I_{a,k}|-1}$ separation patterns per reference. We evaluate at most two references with two SCP iterations per agent.

We measure computation over 50 Proposed DDTO scenarios under the relaxed budget on an Apple M4 Pro MacBook. In the primary closed-loop implementation, the two agent problems are solved in parallel, while up to two reference problems and two SCP iterations are evaluated sequentially within each agent. The median and 95th-percentile end-to-end replan wall times are $5.99$ and $16.12$~s, respectively. The fixed-reference DDTO problems are independent and can be evaluated in parallel. On the 50-scenario benchmark, parallel reference evaluation reduces the median and 95th-percentile end-to-end replan times to $4.15$ and $11.49$~s, respectively, without changing the DDTO formulation.

We additionally evaluated a one-iteration SCP configuration with parallel reference evaluation on the same 100 scenarios used for the relaxed-budget two-iteration SCP closed-loop study. Its median and 95th-percentile replan wall times were $2.41$ and $6.76$~s. In this configuration, an individual fixed-reference solve of $\Psi_{a,k}^r(I)$ had a median solve time of $1.83$~s, indicating that the fixed-reference DDTO solves dominate computation. On the same scenarios, one-iteration SCP captured $90/100$ targets versus $85/100$ for two-iteration SCP while reducing wall-clock time. These stop-the-clock timings do not demonstrate $1$~s real-time replanning.


\section{Conclusion}
\label{sec:conclusion}
We presented a multi-agent DDTO framework for interception under uncertain target identity and terminal location. The analysis establishes fixed-time terminal-location coverage, anytime-valid target retention, and a bound on the probability that a verified tightened-set plan is returned at the final replan but capture fails. Proposed DDTO achieves the highest observed capture rates among the planners: $79\%$ and $85\%$ under strict and relaxed effort budgets, respectively. Fallback ablations show that neither Fallback-disabled nor Fallback-only reproduces the performance of the complete controller.


\bibliographystyle{IEEEtran}
\bibliography{ref}

@INPROCEEDINGS{cmpc,
  author={Alsterda, John P. and Brown, Matthew and Gerdes, J. Christian},
  booktitle={2019 American Control Conference (ACC)}, 
  title={Contingency Model Predictive Control for Automated Vehicles}, 
  year={2019},
  pages={717--722},
  doi={10.23919/ACC.2019.8815260}}

@article{potokar2024holoocean,
  title={{HoloOcean}: A full-featured marine robotics simulator for perception and autonomy},
  author={Potokar, Easton and Lay, Kalliyan and Norman, Kalin and Benham, Derek and Ashford, Spencer and Peirce, Randy and Neilsen, Tracianne B and Kaess, Michael and Mangelson, Joshua G},
  journal={IEEE Journal of Oceanic Engineering},
  volume={49},
  number={4},
  pages={1322--1336},
  year={2024},
  publisher={IEEE},
  doi={10.1109/JOE.2024.3410290}
}

@INPROCEEDINGS{ddmp,
  author={Isele, David and Añon, Alexandre Miranda and Tariq, Faizan M. and Yeh, Goro and Singh, Avinash and Bae, Sangjae},
  booktitle={2025 IEEE International Conference on Robotics and Automation (ICRA)}, 
  title={Delayed-Decision Motion Planning in the Presence of Multiple Predictions}, 
  year={2025},
  pages={11743--11749},
  doi={10.1109/ICRA55743.2025.11128178}}

@book{fossen2011handbook,
  title={Handbook of marine craft hydrodynamics and motion control},
  author={Fossen, Thor I},
  year={2011},
  publisher={John Wiley \& Sons}
}

@article{mesforush2026,
  title={Lasso Universality Under Linearly Dependent Covariates in the Sparse Regime},
  author={Mesforush, Soroush and Parhi, Rahul},
  journal={arXiv preprint arXiv:2608.08390},
  year={2026}
}

@inproceedings{norouzi2025novel,
  title={A novel adaptive formation control strategy for teams of unmanned vehicles under complete dynamic uncertainty},
  author={Norouzi, Maryam and Zhou, Mingxi and Yuan, Chengzhi},
  booktitle={AIAA AVIATION FORUM AND ASCEND 2025},
  pages={3192},
  year={2025},
  doi={10.2514/6.2025-3192}
}

@article{lindemann2023safe,
  author = {Lindemann, Lars and Cleaveland, Matthew and Shim, Gihyun and Pappas, George J.},
  title = {Safe Planning in Dynamic Environments Using Conformal Prediction},
  journal = {IEEE Robotics and Automation Letters},
  volume = {8},
  number = {8},
  pages = {5116--5123},
  year = {2023},
  doi = {10.1109/LRA.2023.3292071}
}

@article{sheng2024safe,
  title={Safe {POMDP} online planning among dynamic agents via adaptive conformal prediction},
  author={Sheng, Shili and Yu, Pian and Parker, David and Kwiatkowska, Marta and Feng, Lu},
  journal={IEEE Robotics and Automation Letters},
  volume={9},
  number={11},
  pages={9946--9953},
  year={2024},
  publisher={IEEE}
}

@article{strawn2023conformal,
  title={Conformal predictive safety filter for {RL} controllers in dynamic environments},
  author={Strawn, Kegan J and Ayanian, Nora and Lindemann, Lars},
  journal={IEEE Robotics and Automation Letters},
  volume={8},
  number={11},
  pages={7833--7840},
  year={2023},
  publisher={IEEE}
}

@article{lee2025dmvc,
  title={{DMVC-Tracker}: Distributed Multi-Agent Trajectory Planning for Target Tracking Using Dynamic Buffered Voronoi and Inter-Visibility Cells},
  author={Lee, Yunwoo and Park, Jungwon and Kim, H Jin},
  journal={IEEE Robotics and Automation Letters},
  year={2025},
  volume={10}, number={5}, pages={4842--4849},
  publisher={IEEE},
  doi={10.1109/LRA.2025.3551255}
}

@article{khass2026multi,
  title={Multi-Agent Next-Best-View Optimization for Risk-Averse Planning},
  author={Khass, Amirhossein Mollaei and Pandey, Vivek and Liu, Guangyi and Cosse, Athanasios and Bayrak, Emrah and Motee, Nader},
  journal={arXiv preprint arXiv:2606.04158},
  year={2026}
}

@article{yuan2025hierarchical,
  title={Hierarchical Planning for Long-Horizon Multi-Target Tracking Under Target Motion Uncertainty},
  author={Yuan, Junbin and Moon, Brady and Cao, Muqing and Scherer, Sebastian},
  journal={IEEE Robotics and Automation Letters},
  year={2025},
  publisher={IEEE},
  volume={10}, number={12}, pages={13113--13120},
  doi={10.1109/LRA.2025.3625492}
}

@article{peters2024contingency,
  title={Contingency games for multi-agent interaction},
  author={Peters, Lasse and Bajcsy, Andrea and Chiu, Chih-Yuan and Fridovich-Keil, David and Laine, Forrest and Ferranti, Laura and Alonso-Mora, Javier},
  journal={IEEE Robotics and Automation Letters},
  volume={9},
  number={3},
  pages={2208--2215},
  year={2024},
  publisher={IEEE}
}

@article{de2021scenario,
  title={Scenario-based trajectory optimization in uncertain dynamic environments},
  author={de Groot, Oscar and Brito, Bruno and Ferranti, Laura and Gavrila, Dariu and Alonso-Mora, Javier},
  journal={IEEE Robotics and Automation Letters},
  volume={6},
  number={3},
  pages={5389--5396},
  year={2021},
  publisher={IEEE}
}

@ARTICLE{calafiore2012robust,
  author={Calafiore, Giuseppe C. and Fagiano, Lorenzo},
  journal={IEEE Transactions on Automatic Control}, 
  title={Robust Model Predictive Control via Scenario Optimization}, 
  year={2013},
  volume={58},
  number={1},
  pages={219-224},
  doi={10.1109/TAC.2012.2203054}}

@inproceedings{dixit2023adaptive,
  title={Adaptive conformal prediction for motion planning among dynamic agents},
  author={Dixit, Anushri and Lindemann, Lars and Wei, Skylar X and Cleaveland, Matthew and Pappas, George J and Burdick, Joel W},
  booktitle={Learning for Dynamics and Control Conference},
  pages={300--314},
  year={2023},
  organization={PMLR}
}

@article{fusioncp,
title = {Conformal prediction-based multi-sensor classification},
journal = {Neurocomputing},
volume = {678},
pages = {133182},
year = {2026},
issn = {0925-2312},
doi = {10.1016/j.neucom.2026.133182},
author = {Farbod Siahkali and Vijay Gupta},
}

@ARTICLE{siahkali2026active,
  author={Siahkali, Farbod and Hou, Mengxue and Gupta, Vijay},
  journal={IEEE Control Systems Letters}, 
  title={Active Sensing and Deferred-Decision Trajectory Optimization for Robust Target Identification}, 
  year={2026},
  volume={10},
  pages={811--816},
  doi={10.1109/LCSYS.2026.3703893}}

@inproceedings{elango2022deferring,
  author = {Elango, Purnanand and Sarsilmaz, Selahattin Burak and Acikmese, Behcet},
  title = {Deferring Decision in Multi-Target Trajectory Optimization},
  booktitle = {AIAA SciTech 2022 Forum},
  pages = {1583},
  year = {2022},
  doi = {10.2514/6.2022-1583}
}

@article{elango2025deferred,
  author = {Elango, Purnanand and Sarsilmaz, Selahattin Burak and Acikmese, Behcet},
  title = {Deferred-Decision Trajectory Optimization},
  journal = {arXiv preprint arXiv:2502.06623},
  year = {2025},
  eprint = {2502.06623},
  archivePrefix = {arXiv},
  primaryClass = {math.OC}
}

@book{vovk2005algorithmic,
  author = {Vovk, Vladimir and Gammerman, Alexander and Shafer, Glenn},
  title = {Algorithmic Learning in a Random World},
  publisher = {Springer},
  address = {New York},
  year = {2005}
}

@article{angelopoulos2023gentle,
  author = {Angelopoulos, Anastasios N. and Bates, Stephen},
  title = {Conformal Prediction: A Gentle Introduction},
  journal = {Foundations and Trends in Machine Learning},
  volume = {16},
  number = {4},
  pages = {494--591},
  year = {2023},
  doi = {10.1561/2200000101}
}

@inproceedings{romano2019cqr,
  author = {Romano, Yaniv and Patterson, Evan and Cand{\`e}s, Emmanuel},
  title = {Conformalized Quantile Regression},
  booktitle = {Advances in Neural Information Processing Systems},
  volume = {32},
  year = {2019}
}

@inproceedings{muthali2023reachcal,
  author = {Muthali, Anish and Shen, Haotian and Deglurkar, Sampada and Lim, Michael H. and Roelofs, Rebecca and Faust, Aleksandra and Tomlin, Claire J.},
  title = {Multi-Agent Reachability Calibration with Conformal Prediction},
  booktitle = {2023 62nd IEEE Conference on Decision and Control (CDC)},
  pages = {6596--6603},
  year = {2023},
  doi = {10.1109/CDC49753.2023.10383723}
}

@article{vovk2021evalues,
  author = {Vovk, Vladimir and Wang, Ruodu},
  title = {E-Values: Calibration, Combination, and Applications},
  journal = {The Annals of Statistics},
  volume = {49},
  number = {3},
  pages = {1736--1754},
  year = {2021},
  doi = {10.1214/20-AOS2020}
}

@book{ville1939etude,
  author = {Ville, Jean},
  title = {{\'E}tude Critique de la Notion de Collectif},
  series = {Th{\`e}ses de l'entre-deux-guerres},
  number = {218},
  publisher = {Gauthier-Villars},
  address = {Paris},
  year = {1939}
}

@inproceedings{fedorova2012plugin,
  author={Valentina Fedorova and Alex Gammerman and Ilia Nouretdinov and Vladimir Vovk},
  title={Plug-in Martingales for Testing Exchangeability On-line},
  booktitle={Proceedings of the 29th International Conference on Machine Learning},
  pages={1639--1646},
  year={2012}
}

@inproceedings{rezaei2025online,
  title={An Online Path Planner for Bandwidth-aware Aerial Camera Networks},
  author={Rezaei, Mohammad Amin and Manfredi, Gioacchino and Racanelli, Vito Andrea and De Cicco, Luca and Mascolo, Saverio},
  booktitle={2025 17th International Congress on Ultra Modern Telecommunications and Control Systems and Workshops (ICUMT)},
  pages={300--305},
  year={2025},
  organization={IEEE},
  doi={10.1109/ICUMT67815.2025.11268743}
}

@article{mazloom2026barl,
title = {{BARL}: Batch active regression learning via query-by-committee with application to slope factor of safety prediction},
journal = {Applied Soft Computing},
volume = {188},
pages = {114412},
year = {2026},
issn = {1568-4946},
doi = {10.1016/j.asoc.2025.114412},
author = {Shayan Mazloom and Abdollah Shafieezadeh and Jieun Hur and Jae-Wook Jung and Jeong-Gon Ha and Daegi Hahm},
}
\end{document}